\newcommand{\At}{\mathcal At}
\newcommand{\body}[1]{\mathsf{body}\left(#1\right)}
\newcommand{\head}[1]{\mathsf{head}\left(#1\right)}
\renewcommand{\I}{\mathcal I}
\newcommand{\Ieta}{\langle\mathcal I,\eta\rangle}
\newcommand{\J}{\mathcal J}
\newcommand{\U}{\mathcal U}
\newcommand{\neff}[1]{\mathsf{ne}\left(\I,\I\circ#1\right)}
\newcommand{\neffU}{\neff{\U}}
\newcommand{\lit}{\mathsf{lit}}
\newcommand{\nup}{\mathsf{nup}}
\newcommand{\ua}{\mathsf{ua}}
\newcommand{\pnot}{\mathsf{not}\ }
\newcommand{\tool}{\texttt{repAIrC}}
\newcommand{\class}[1]{\texttt{#1}}
\newtheorem{theorem}{Theorem}
\newtheorem{example}{Example}
\newtheorem{lemma}{Lemma}
\newtheorem{definition}{Definition}
\begin{document}

\title{\tool: A Tool for Ensuring Data Consistency\subtitle{by Means of Active Integrity Constraints}}
\author{
  \authorname{Lu\'\i s Cruz-Filipe\sup{1},
    Michael Franz\sup{1},
    Artavazd Hakhverdyan\sup{1},
    Marta Ludovico\sup{2},
    Isabel Nunes\sup{2}
    and Peter Schneider-Kamp\sup{1}}
  \affiliation{\sup{1}Dept.\ of Mathematics and Computer Science, University of Southern Denmark, Campusvej 55, 5230 Odense M, Denmark}
  \affiliation{\sup{2}Faculdade de Ci\^encias da Universidade de Lisboa, Campo Grande, 1749-016 Lisboa, Portugal}
  \email{lcf@imada.sdu.dk, mf@bfdata.dk, artavazd19@gmail.com, marta.al.ludovico@gmail.com, in@fc.ul.pt, petersk@imada.sdu.dk}
}

\keywords{active integrity constraints, database repair, implementation}

\abstract{Consistency of knowledge repositories is of prime importance in
  organization management.
  Integrity constraints are a well-known vehicle for specifying data consistency
  requirements in knowledge bases; in particular, active
  integrity constraints go one step further, allowing the specification of
  preferred ways to overcome inconsistent situations in the context of database
  management.\\
  This paper describes a tool to validate an SQL database with respect to a
  given set of active integrity constraints, proposing possible repairs in case
  the database is inconsistent.
  The tool is able to work with the different kinds of repairs proposed in the
  literature, namely simple, founded, well-founded and justified repairs.
  It also implements strategies for parallelizing the search for them, allowing
  the user both to compute partitions of independent or stratified active
  integrity constraints, and to apply these partitions to find repairs of
  inconsistent databases efficiently in parallel.}

\onecolumn \maketitle \normalsize \vfill

\section{\uppercase{Introduction}}

\noindent
There is a generalized consensus that knowledge repositories are a key
ingredient in the whole process of Knowledge Management,
cf.~\cite{Duhon98,Konig2012}.
Furthermore, being able to rely upon the consistency of the information they
provide is paramount to any business whatsoever.
Databases and database management systems, by far the most common framework for
knowledge storage and retrieval, have been around for many years now,
and have evolved substantially, at pace with information technology.
In this paper, we are focusing on the important aspect of database consistency.

Typical database management systems allow the user to specify
integrity constraints on the data as logical statements that are
required to be satisfied at any given point in time.
The classical problem is how to guarantee that such constraints still
hold after updating databases~\cite{Abiteboul1988}, and what repairs
have to be made when the constraints are violated~\cite{Katsuno1991},
without making any assumptions about how the inconsistencies came about.
Repairing an inconsistent database~\cite{Eiter1992} is a highly
complex process; also, it is widely accepted that human
intervention is often necessary to choose an adequate repair.
That said, every progress towards automation in this field is nevertheless important.

In particular, the framework of active integrity
constraints~\cite{Flesca2004,Caroprese2011} was introduced more recently with the
goal of giving operational mechanisms to compute repairs of
inconsistent databases.
This framework has subsequently been extended to consider
preferences~\cite{Caroprese2007} and to find ``best'' repairs
automatically~\cite{CEGN2013} and efficiently~\cite{lcf:14}.

Active integrity constraints (AICs) seem to be a promising framework
for the purpose of achieving reliability in information retrieval:
\begin{itemize}
\item AICs are expressive enough to encompass the majority of
  integrity constraints that are typically found in practice;
\item AICs allow the definition of preferred ways to calculate
  repairs, through specific actions to be taken in specific
  inconsistent situations;
\item AICs provide mechanisms to resolve inconsistencies while the
  database is in use;
\item AICs can enhance databases to provide a basis for self-healing
  autonomic systems.
\end{itemize}
To the best of our knowledge, no real-world implementation of an
AIC--enhanced database system exists today.
This paper presents a prototype tool that implements the tree--based
algorithms for computing repairs presented
in~\cite{Caroprese2011,CEGN2013}.
While not yet ready for productive deployment, this implementation can
work successfully with database management systems working in the SQL
framework, and is readily extendible to other (nearly arbitrary)
database management systems thanks to its modular design.

This paper is structured as follows.
Section~\ref{sec:background} recapitulates previous work on active
integrity constraints and repair trees.
Section~\ref{sec:tool} introduces our tool, \tool, and describes its
implementation, focusing on the new theoretical results that were
necessary to bridge the gap between theory and practice.
Section~\ref{sec:parallel} then discusses how parallel computation
capabilities are incorporated in \tool\ to make the search for repairs
more efficient.
Section~\ref{sec:concl} summarizes our achievements and gives a brief
outlook into future developments.

\section{\uppercase{Active integrity constraints}}
\label{sec:background}

\noindent
Active integrity constraints (AICs) were introduced in~\cite{Flesca2004} and
further explored in~\cite{Caroprese2009,Caroprese2011}, which define the basic
concepts and prove complexity bounds for the problem of repairing inconsistent
databases.
These authors introduce declarative semantics for different types of repairs,
obtaining their complexity results by means of a translation into revision
programming.
In practice, however, this does not yield algorithms that are applicable to
real-life databases; for this reason, a direct operational semantics
for AICs was proposed in~\cite{CEGN2013}, presenting database-oriented
algorithms for finding repairs.
The present paper describes a tool that can actually execute these algorithms in
collaboration with an SQL database management system.

\subsection{Syntax and Declarative Semantics}

For the purpose of this work, we can view a database simply as a set of atomic
formulas over a typed function-free
first-order signature $\Sigma$, which we will assume throughout to be fixed.
Let $\At$ be the set of closed atomic formulas over $\Sigma$.
A database $\I$ \emph{entails} literal $L$, $\I\models L$, if $L\in\At$ and
$L\in\I$, or if $L$ is $\pnot a$ with $a\in\At$ and $a\notin\I$.

An integrity constraint is a clause $$L_1,\ldots,L_m \supset \bot$$ where each
$L_i$ is a literal over $\Sigma$, with intended semantics that
$\forall(L_1\wedge\ldots\wedge L_m)$ should not hold.
As is usual in logic programming, we require that if $L_i$ contains a negated variable $x$, then $x$
already occurs in $L_1,\ldots,L_{i-1}$.
We say that $\I$ \emph{satisfies} integrity constraint $r$, $\I\models r$, if,
for every instantiation $\theta$ of the variables in $r$, it is the case that
$\I\not\models L\theta$ for some $L$ in $r$; and $\I$ satisfies a set $\eta$ of
integrity constraints, $\I\models\eta$, if it satisfies each integrity
constraint in $\eta$.

If $\I\not\models\eta$, then $\I$ may be updated through \emph{update actions}
of the form $+a$ and $-a$, where $a\in\At$, stating that $a$ is to be inserted
in or deleted from $\I$, respectively.
A set of update actions $\U$ is \emph{consistent} if it does not contain both
$+a$ and $-a$, for any $a\in\At$; in this case, $\I$ can be updated by $\U$,
yielding the database
$$\I\circ\U = (\I\cup\left\{a\mid{+a}\in\U\right\})\setminus \left\{a\mid{-a}\in\U\right\}\,.$$
The problem of database repair is to find $\U$ such that $\I\circ\U\models\eta$.

\begin{definition}
  Let $\I$ be a database and $\eta$ a set of integrity constraints.
  A \emph{weak repair} for $\Ieta$ is a consistent set $\U$ of update actions
  such that:
  (i)~every action in $\U$ changes $\I$; and
  (ii)~$\I\circ\U\models\eta$.
  A \emph{repair} for $\Ieta$ is a weak repair $\U$ for $\Ieta$ that is minimal w.r.t.\ set
  inclusion.
\end{definition}
The distinction between weak repairs and repairs embodies the
standard principle of \emph{minimality of change}~\cite{Winslett1990}.

The problem of deciding whether there exists a (weak) repair for an
inconsistent database is
$NP$-complete~\cite{Caroprese2011}.
Furthermore, simply detecting that a
database is inconsistent does not give any information on how it can be
repaired.
In order to address this issue, those authors proposed active integrity constraints (AICs),
which guide the process of selection of a repair by pairing literals with the
corresponding update actions.

In the syntax of AICs, we extend the notion of update action by allowing
variables.
Given an action $\alpha$, the literal corresponding to it is $\lit(\alpha)$,
defined as $a$ if $\alpha={+a}$ and $\pnot a$ if $\alpha={-a}$; conversely, the
update action corresponding to a literal $L$, $\ua(L)$, is $+a$ if $L=a$ and
$-a$ if $L=\pnot a$.
The \emph{dual} of $a$ is $\pnot a$, and conversely; the dual of $L$ is denoted
$L^D$.
An \emph{active integrity constraint} is thus an expression $r$ of the form
$$L_1,\ldots,L_m \supset \alpha_1\mid\ldots\mid\alpha_k$$
where the $L_i$ (in the \emph{body} of $r$, $\body r$) are literals and the
$\alpha_j$ (in the \emph{head} of $r$, $\head r$) are update actions, such that
\[\left\{\lit(\alpha_1)^D,\ldots,\lit(\alpha_k)^D\right\}\subseteq\left\{L_1,\ldots,L_m\right\}\,.\]
The set $\lit(\head r)^D$ contains the \emph{updatable} literals of $r$.
The \emph{non-updatable} literals of $r$ form the set
$\nup(r)=\body r\setminus\lit\left(\head r\right)^D$.

The natural semantics for AICs restricts the notion of weak repair.

\begin{definition}
  Let $\I$ be a database, $\eta$ a set of AICs and $\U$ be a (weak) repair for
  $\Ieta$.
  Then $\U$ is a \emph{founded (weak) repair} for $\Ieta$ if, for every action
  $\alpha\in\U$, there is a closed instance $r'$ of $r\in\eta$ such that
  $\alpha\in\head{r'}$ and $\I\circ\U\models L$ for every
  $L\in\body{r'}\setminus\left\{\lit(\alpha)^D\right\}$.
\end{definition}

The problem of deciding whether there exists a weak founded repair for an
inconsistent database is again $NP$-complete, while the similar problem for
founded repairs is $\Sigma^P_2$-complete.
Despite their natural definition, founded repairs can include circular support
for actions, which can be undesirable; this led to the introduction of justified
repairs~\cite{Caroprese2011}.

We say that a set $\U$ of update actions is \emph{closed} under $r$ if
$\nup(r)\subseteq\lit(\U)$ implies $\head r\cap\U\neq\emptyset$, and it is
closed under a set $\eta$ of AICs if it is closed under every closed instance of
every rule in $\eta$.
In particular, every founded weak repair for $\Ieta$ is by definition
closed under~$\eta$.

A closed update action $+a$ (resp.\ $-a$) is a \emph{no-effect} action w.r.t.\
$(\I,\I\circ\U)$ if $a\in\I\cap(\I\circ\U)$
(resp.\ $a\notin\I\cup(\I\circ\U)$).
The set of all no-effect actions w.r.t.\ $(\I,\I\circ\U)$ is denoted by
$\neffU$.
A set of update actions $\U$ is a justified action set if it coincides with the
set of update actions forced by the set of AICs and the database before and
after applying $\U$~\cite{Caroprese2011}.

\begin{definition}
  Let $\I$ be a database and $\eta$ a set of AICs.
  A consistent set $\U$ of update actions is a \emph{justified action set} for
  $\Ieta$ if it is a minimal set of update actions containing $\neffU$ and
  closed under $\eta$.
  If $\U$ is a justified action set for $\Ieta$, then $\U\setminus\neffU$ is a
  justified weak repair for $\Ieta$.
\end{definition}
In particular, it has been shown that justified repairs are always
founded~\cite{Caroprese2011}.
The problem of deciding whether there exist justified weak repairs or justified repairs
for $\Ieta$ is again a
$\Sigma^P_2$-complete problem, becoming $NP$-complete if one restricts the AICs
to contain only one action in their head (\emph{normal} AICs).

\subsection{Operational Semantics}

The declarative semantics of AICs is not very satisfactory, as it does not
capture the operational nature of rules.
In particular, the quantification over all no-effect actions in the definition
of justified action set poses a practical problem.
Therefore, an operational semantics for AICs was proposed in~\cite{CEGN2013},
which we now summarize.

\begin{definition}
  Let $\I$ be a database and $\eta$ be a set of AICs.
  \begin{itemize}
  \item The \emph{repair tree} for $\Ieta$, $T_{\Ieta}$, is a labeled tree
    where: nodes are sets of update actions; each edge is labeled with a closed
    instance of a rule in $\eta$;
    the root is $\emptyset$; and for each consistent node $n$ and closed
    instance $r$ of a rule in $\eta$, if $\I\circ n\not\models r$ then for each
    $L\in\body{r}$ the set $n'=n\cup\left\{\ua(L)^D\right\}$ is a child of $n$,
    with the edge from $n$ to $n'$ labeled by $r$.
  \item The \emph{founded repair tree} for $\Ieta$, $T^f_{\Ieta}$, is
    constructed as $T_{\Ieta}$ but requiring that $\ua(L)$ occur in the head of
    some closed instance of a rule in $\eta$.
  \item The \emph{well-founded repair tree} for $\Ieta$, $T^{wf}_{\Ieta}$, is
    also constructed as $T_{\Ieta}$ but requiring that $\ua(L)$ occur in the
    head of the rule being applied.
  \item The \emph{justified repair tree} for $\Ieta$, $T^j_{\Ieta}$, has nodes
    that are \emph{pairs} of sets of update actions $\langle\U,\J\rangle$, with
    root $\langle\emptyset,\emptyset\rangle$.
    For each node $n$ and closed instance $r$ of a rule in $\eta$, if
    $\I\circ\U_n\not\models r$, then for each $\alpha\in\head{r}$ there is a
    descendant $n'$ of $n$, with the edge from $n$ to $n'$ labeled by $r$,
    where:
    $\U_{n'} = \U_n\cup\left\{\alpha\right\}$;
    and $\J_{n'} = \left(\J_n\cup\{\ua(\nup(r))\}\right)\setminus\U_n$.
  \end{itemize}
\end{definition}

The properties of repair trees are summarized in the following results, proved
in~\cite{CEGN2013}.
\begin{theorem}
  \label{thm:trees}
  Let $\I$ be a database and $\eta$ be a set of AICs.  Then:
  \begin{enumerate}
  \item $T_{\Ieta}$ is finite.
  \item Every consistent leaf of $T_{\Ieta}$ is labeled by a weak repair for $\Ieta$.
  \item If $\U$ is a repair for $\Ieta$, then there is a branch of $T_{\Ieta}$
    ending with a leaf labeled by $\U$.
  \item If $\U$ is a founded repair for $\Ieta$, then there
    is a branch of $T^f_{\Ieta}$ ending with a leaf
    labeled by $\U$.
  \item If $\U$ is a justified repair for $\Ieta$, then there
    is a branch of $T^j_{\Ieta}$ ending with a leaf
    labeled by $\U$.
  \item If $\eta$ is a set of normal AICs and $\langle\U,\J\rangle$ is a leaf of
    $T^j_{\Ieta}$ with $\U$ consistent and $\U\cap\J=\emptyset$, then $\U$ is a
    justified repair for $\Ieta$.
  \end{enumerate}
\end{theorem}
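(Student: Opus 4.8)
The plan is to establish the six items by two recurring techniques: a \emph{leaf analysis} for the soundness statements (items~2 and~6) and a \emph{guided descent} for the completeness statements (items~3--5), with item~1 as a preliminary finiteness argument. For item~1 I would first note that any child $n'=n\cup\{\ua(L)^D\}$ created from a consistent node $n$ by a violated instance $r$ enlarges $n$ strictly: since $\I\circ n\models L$ for every $L\in\body r$, the flip $\ua(L)^D$ cannot already belong to the consistent set $n$. Thus the labels strictly increase along every branch, and as only consistent nodes are expanded each label has at most $|\At|$ elements, bounding the depth; the branching is bounded by the number of closed rule instances times the maximal body length, so (the signature being finite) $T_{\Ieta}$ is finite. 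For item~2, a consistent leaf is unexpandable, so by construction no closed instance is violated, i.e.\ $\I\circ n\models\eta$; and an easy induction shows every action in a consistent node changes $\I$---the action added at a step flips a body literal true in $\I\circ n$, and consistency of the enlarged node forces this flip to change $\I$ itself---so $n$ is a weak repair.

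For item~3 I would build the branch by descent under the invariant $n\subseteq\U$. At any $n\subsetneq\U$ some instance must be violated: otherwise $\I\circ n\models\eta$, and since the actions of $n$ (being those of $\U$) all change $\I$, $n$ would be a weak repair strictly inside $\U$, contradicting minimality. Picking a violated $r$, the fact $\I\circ\U\models r$ yields a body literal $L_0$ with $\I\circ\U\not\models L_0$; as $\I\circ n\models L_0$ and $n\subseteq\U$, the flip $\ua(L_0)^D$ lies in $\U\setminus n$, so the child obtained by expanding $r$ at $L_0$ is still contained in $\U$ and strictly larger. Since $\U$ is finite the descent ends at the node $\U$, a leaf because $\I\circ\U\models\eta$.

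Item~4 reuses this descent verbatim, adding only the observation that every child it selects is actually present in $T^f_{\Ieta}$: the added action $\ua(L_0)^D$ belongs to $\U$, and foundedness of $\U$ forces the corresponding update action to occur in some closed rule head, matching the restriction that defines $T^f_{\Ieta}$. Item~5 is structurally the same but branches on head actions. At a node with $\U_n\subseteq\U$ and a violated instance $r$ I would show $\head r\cap\U\neq\emptyset$: because justified repairs are founded and hence closed, the associated action set $\U\cup\neffU$ is closed under $\eta$, and one checks that the non-updatable literals of $r$ are all realised in $\lit(\U\cup\neffU)$, so closure delivers a head action in $\U\cup\neffU$; its consistency with the already-chosen duals in $\U_n$ rules out that this action is a no-effect one, placing it in $\U$. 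Descending on such an action and updating $\J$ as prescribed keeps $\U_{n'}\subseteq\U$ strictly larger, and minimality again prevents early termination, so the branch reaches $\U$.

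Item~6 is the converse and I expect it to be the main obstacle, since it must reconcile the purely operational bookkeeping of $\J$ with the declarative definition of a justified action set as a \emph{minimal} superset of $\neffU$ closed under $\eta$. The plan is to prove that, for a leaf with $\U$ consistent, $\U\cap\J=\emptyset$ and normal AICs, the set $\U\cup\neffU$ is precisely such a set. Closure follows because expansion stops only when no instance is violated; containment of $\neffU$ is immediate; and minimality is where $\U\cap\J=\emptyset$ does the work, as the $\J$-component accumulates exactly the obligations $\ua(\nup(r))$ generated by the rules used, with already-committed actions removed, so disjointness from $\U$ certifies that dropping any chosen action would break closure. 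Normality is what makes the ``one head action per violated rule'' branching faithful to this global condition, and carefully proving that the invariant maintained by $\J$ coincides with non-redundancy of $\U\cup\neffU$---rather than the routine descents of items~3--5---is where the real difficulty lies.
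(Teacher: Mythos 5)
Note first that the paper itself does not prove Theorem~\ref{thm:trees}: it is quoted from the cited work on operational semantics for AICs (``proved in [CEGN2013]''), so there is no in-paper argument to compare yours against and your proposal must be judged on its own. Items~1--4 are essentially sound and follow the standard route: strict growth of labels along branches (the flip $\ua(L)^D$ of a literal true in $\I\circ n$ cannot already lie in a consistent $n$), unexpandability of consistent leaves plus the observation that consistency forces every added action to change $\I$ itself, and the guided descent under the invariant $n\subseteq\U$, which transfers to $T^f_{\Ieta}$ because foundedness places every action of $\U$ in some closed rule head.

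The genuine gap is in item~5. You claim that at a node with $\U_n\subseteq\U$, \emph{every} violated closed instance $r$ satisfies $\head{r}\cap\U\neq\emptyset$, deriving this from closure of $\U\cup\neff{\U}$ under $r$ after asserting that $\nup(r)\subseteq\lit(\U\cup\neff{\U})$. That assertion fails exactly when $\U$ repairs $r$ by falsifying one of its \emph{non-updatable} body literals: such a literal is true in $\I\circ\U_n$, yet its update action lies in neither $\U$ nor $\neff{\U}$, so closure under $r$ is vacuous and delivers no head action. Concretely, take $\I=\{a,b\}$ with rules $a,b\supset -a$ and $b\supset -b$: the set $\U=\{-b\}$ is a justified repair, the first rule is violated at the root, and its head $\{-a\}$ is disjoint from $\U$. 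The theorem survives because \emph{some} violated instance (here the second rule) has a head action in $\U\setminus\U_n$, but establishing that existential claim requires a different argument --- e.g.\ fixing an action $\alpha\in\U\setminus\U_n$ and producing a rule that justifies it and is violated at $\I\circ\U_n$ --- which your proposal does not supply. Item~6 is likewise not yet a proof: you correctly locate the crux (relating the bookkeeping set $\J$ and the condition $\U\cap\J=\emptyset$ to minimality of $\U\cup\neff{\U}$ among sets containing $\neff{\U}$ and closed under $\eta$, with normality making the per-rule branching faithful), but you explicitly defer that step rather than carrying it out, so the hardest direction of the theorem remains unestablished.
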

Not all leaves will correspond to repairs of the desired kind; in particular,
there may be weak repairs in repair trees.
Also, both $T^f_{\Ieta}$ and $T^j_{\Ieta}$ typically contain leaves that do not
correspond to founded or justified (weak) repairs -- otherwise the problem of
deciding whether there exists a founded or justified weak repair for $\Ieta$
would be solvable in non-deterministic polynomial time.
The leaves of the well-founded repair tree for $\Ieta$ correspond to a new type of weak repairs,
called \emph{well-founded weak repairs}, not
considered in the original works on AICs.

\subsection{Parallel Computation of Repairs}
\label{ssec:par}

The computation of founded or justified repairs can be improved by dividing the
set of AICs into independent sets that can be processed
independently, simply merging the computed repairs at the end~\cite{lcf:14}.
Here, we adapt the definitions given therein to the first-order scenario.
Two sets of AICs $\eta_1$ and $\eta_2$ are independent if the same atom does not
occur in a literal in the body of a closed instance of two distinct rules
$r_1\in\eta_1$ and $r_2\in\eta_2$.
If $\eta_1$ and $\eta_2$ are independent, then repairs for
$\langle I,\eta_1\cup\eta_2\rangle$ are exactly the unions of a repair for
$\langle\I,\eta_1\rangle$ and $\langle\I,\eta_2\rangle$; furthermore,
the result still holds if one considers founded, well-founded or
justified repairs.

If an atom occurs in a literal in the body of a closed instance of a rule in
$\eta_2$ and in an action in the head of a closed instance of a rule in
$\eta_1$, but not conversely, then we say that $\eta_1$ \emph{precedes}
$\eta_2$.
Founded/justified (but not well-founded) repairs for $\eta_1\cup\eta_2$ can be
computed in a stratified way, by first repairing $\I$ w.r.t.~$\eta_1$, and then
repairing the result w.r.t.~$\eta_2$.

Splitting a set of AICs into independent sets or stratifying it can be solved
using standard algorithms on graphs, as we describe in
Section~\ref{sec:parallel}.

\section{\uppercase{The tool}}
\label{sec:tool}

\noindent
The tool \tool\ is implemented in Java, and its simplified UML class diagram can
be seen in Figure~\ref{fig:classes}.
Structurally, this tool can be split into four main separate components,
centered on the four classes marked in bold in that figure.
\begin{itemize}
\item Objects of type \class{AIC} implement active integrity constraints.
\item Implementations of interface \class{DB} provide the necessary tools to
  interact with a particular database management system; currently, we provide
  functionality for SQL databases supported by JDBC.
\item Objects of type \class{RepairTree} correspond to concrete repair trees;
  their exact type will be the subclass corresponding to a particular kind of
  repairs.
\item Class \class{RunRepairGUI} provides the graphical interface to interact
  with the user.
\end{itemize}

\begin{figure*}[!ht]
  \centering
  \scriptsize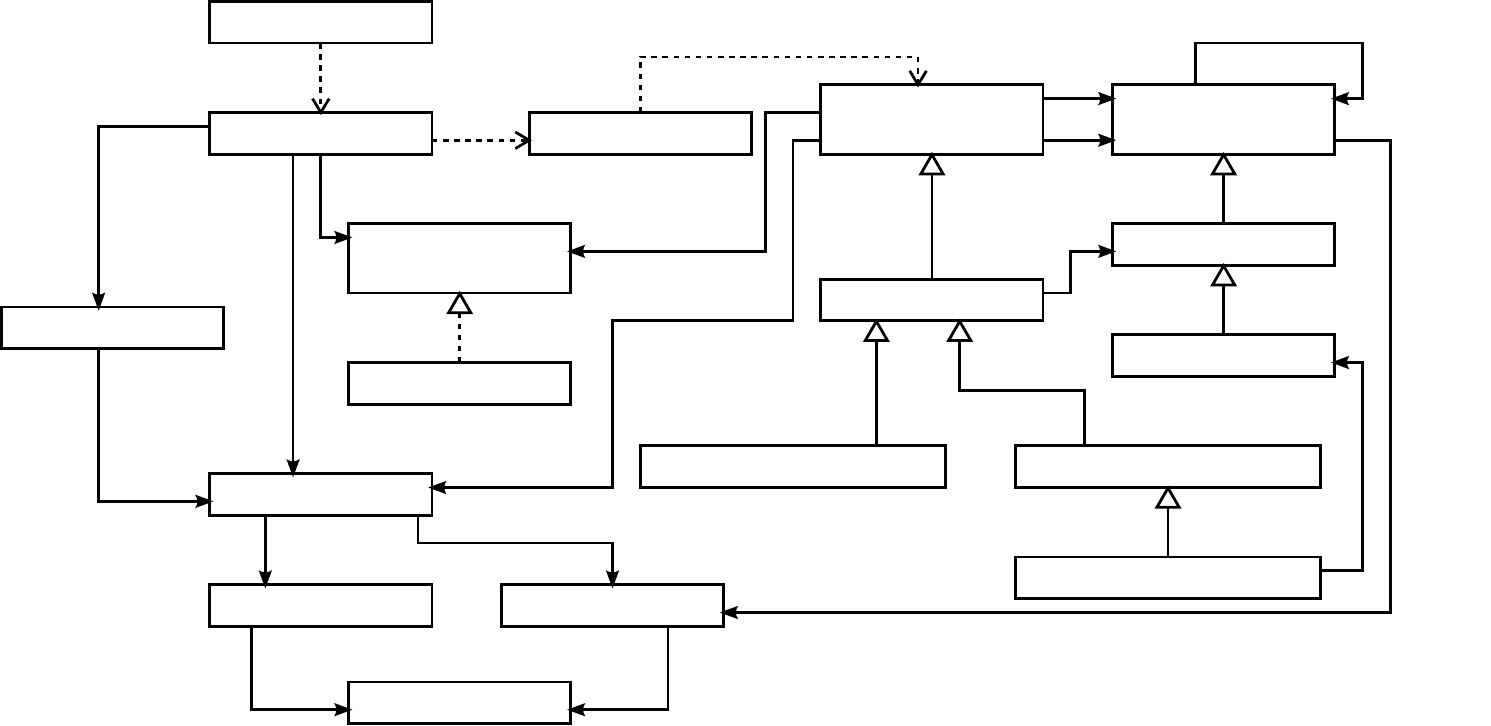
  \caption{Class diagram for \tool.}
  \label{fig:classes}
\end{figure*}

An important design aspect has to do with extensibility and modularity.
A first prototype focused on the construction of repair trees, and used simple
text files to mimick databases as lists of propositional atoms, in the style
of~\cite{Caroprese2011,CEGN2013}.
Later, parallelization capabilities were added (as explained in
Section~\ref{sec:parallel}), requiring changes only to \class{RepairController}
-- the class that controls the execution of the whole process.
Likewise, the extension of \tool\ to SQL databases and the addition of the
stratification mechanism only required localized changes in the classes directly
concerned with those processes.

The next subsections detail the implementation of the classes \class{AIC}, \class{DB},
\class{RepairTree} and \class{RunRepairTreeGUI}.

\subsection{Representing Active Integrity Constraints}

\noindent
In the practical setting, it makes sense to diverge a little from the
theoretical definition of AICs.
\begin{itemize}
\item Real-world tables found in DBs contain many columns, most of which are typically irrelevant for a given integrity constraint.
\item The columns of a table are not static, i.e., columns are usually added or removed during a database's lifecycle.
\item The order of columns in a table should not matter, as they are identified by a unique column name.
\end{itemize}
To deal pragmatically with these three aspects, we will write atoms using a more database-oriented notation, allowing
the arguments to be provided in any order, but requiring that the column names
be provided.
The special token \verb+$+ is used as first character of a variable.
So, for example, the literal \verb+hasInsurance(firstName=$X, type='basic')+
will match any entry in table \verb+hasInsurance+ having value \verb+basic+ in
column \verb+type+ and any value in column \verb+firstName+; this table may
additionally have other columns.
Negative literals are preceded by the keyword \verb+NOT+, while actions must
begin with \verb-+- or \verb+-+.
Literals and actions are separated by commas, and the body and head of an AIC
are separated by \verb+->+.
The AIC is finished when \verb+;+ is encountered, thus allowing constraints to
span several lines.

AICs are provided in a text file, which is parsed by a parser generated
automatically using JavaCC and transformed into objects of type \class{AIC}.
These contain a body and a head, which are respectively
\class{List<Literal>} and \class{List<Action>};
for consistency with the underlying theory, \class{Literal} and \class{Action}
are implemented separately, although their objects are isomorphic:
they contain an object of type \class{Clause} (which consists of the
name of a table in the database and a list of pairs column name/value)
and a flag indicating whether they are positive/negated (literals) or additions/removals
(actions).

\begin{example}
\label{ex:AIC1}
Consider the following active integrity constraints for an employee database.
The first states
that the boss (as specified in the \textnormal{\sf category} table) cannot be a junior employee (i.e., have an entry in the \textnormal{\sf junior} table);
the second states that every junior employee must have some basic insurance (as specified in the \textnormal{\sf insured} table).
\[
  \mathsf{junior}(X), \mathsf{category}(\mathsf{boss},X) \supset
  -\mathsf{junior}(X)
\]
\begin{multline*}
  \label{AICex2}
  \mathsf{junior}(X), \mathsf{not\ insured}(X,\mathsf{basic})
  \\ \supset +\mathsf{insured}(X,\mathsf{basic})
\end{multline*}

These are written in the concrete text-based syntax of the \tool\ tool as
{\small
\begin{verbatim}
junior(id = $X),
  category(type = boss, empId = $X)
  -> - junior(id = $X);

junior(id = $X),
  NOT insured(empId = $X, type = basic)
  -> + insured(empId = $X, type = basic);
\end{verbatim}}
\noindent respectively, assuming the corresponding column names for
the atributes.
Note that, thanks to our usage of explicit column naming, the column names for the same variable need not have
identical designations.
\end{example}

\subsection{Interfacing with the Database}
\label{ssec:interface}

Database operations (queries and updates) are defined in the
\class{DB} interface, which contains the following methods.
\begin{itemize}
\item \verb+getUpdateActions(AIC aic)+: queries the database for all the instances
  of \verb+aic+ that are not satisfied in its current state, returning a
  \verb+Collection<Collection<Action>>+ that contains the corresponding
  instantiations of the head of \verb+aic+.
\item \verb+update(Collection<Action> actions)+: applies all update actions in
  \verb+actions+ to the database (void).
\item \verb+undo(Collection<Action> actions)+: undoes the effect of all update
  actions in \verb+actions+ (void).
\item \verb+aicsCompatible(Collection<AIC> aics)+: checks that all the elements of
  \verb+aics+ are compatible with the structure of the database.
\item \verb+disconnect()+: disconnects from the database (void). The
  connection is established when the object is originally constructed.
\end{itemize}

Some of these methods require more detailed comments.
The construction of the repair tree also requires that the database be changed
interactively, but upon conclusion the database should be returned to its
original state.
In theory, this would be achievable by applying the \verb+update+ method with
the duals of the actions that were used to change the database; but this turns
out not to be the case for deletion actions.
Since the AICs may underspecify the entries in the database (because some fields
are left implicit), the implementation of \verb+update+ must take care to
store the values of all rows that are deleted from the database.
In turn, the \verb+undo+ method will read this information every time it has to
undo a deletion action, in order to find out exactly what entries to re-add.

The method \verb+aicsCompatible+ is necessary because the AICs are given
independently of the database, but they must be compatible with its structure --
otherwise, all queries will return errors.
Including this method in the interface allows the AICs to be tested before any queries are made,
thus significantly reducing the number of exceptions that can occur during
program execution.





  


Currently, \tool\ includes an implementation \verb+DBMySQL+ of \verb+DB+, which
works with SQL databases.
The interaction between \tool\ and the database is achieved by means of JDBC, a
Java database connectivity technology able to interface with nearly all existing
SQL databases.
In order to determine whether an AIC is satisfied by a database, method
\verb+getUpdateActions+ first builds a single SQL query corresponding to the body of
the AIC.
This method builds two separate \verb+SELECT+ statements, one for the positive
and another for the negative literals in the body of the AIC.
Each time a new variable is found, the table and column where it occurs are
stored, so that future references to the same variable in a positive literal can
be unified by using inner joins.
The \verb+select+ statement for the negative literals is then connected to the
other one using a \verb+WHERE NOT EXISTS+ condition.
Variables in the negative literals must necessarily appear first in a positive
literal in the same AIC; therefore, they can then be connected by a \verb+WHERE+
clause instead of an inner join.

\begin{example}
The bodies of the integrity constraints in Example~\ref{ex:AIC1} generate the
following SQL queries.
{\small
\begin{verbatim}
SELECT * FROM junior
  INNER JOIN dept_emp
  ON junior.id=category.empId
  WHERE category.type=`boss'

SELECT * FROM junior
  WHERE NOT EXISTS
 (SELECT * FROM insured
    WHERE insured.empId=junior.id
    AND insured.type=`basic')
\end{verbatim}}
\end{example}

\subsection{Implementing Repair Trees}

The implementation of the repair trees directly follows the algorithms described
in Section~\ref{sec:background}.
Different types of repair trees are implemented using inheritance, so that most
of the code can be reused in the more complex trees.
The trees are constructed in a breadth-first manner, and all non-contradictory
leaves that are found are stored in a list.
At the end, this list is pruned so that only the minimal elements (w.r.t.~set
inclusion) remain -- as these are the ones that correspond to repairs.

While constructing the tree, the database has to be temporarily updated and
restored.
Indeed, to calculate the descendants of a node, we first need to evaluate all
AICs at that node in order to determine which ones are violated; this requires
querying a modified version of the database that takes into account the update
actions in the current node.

In order to avoid concurrency issues, these updates are performed in a transaction-style way,
where we update the database, perform the necessary SQL queries, and rollback to
the original state, guaranteeing that other threads interacting with the
database during this process neither see the modifications nor lead to inconsistent repair trees.
This becomes of particular interest when the parallel processing tools described
in Section~\ref{sec:parallel} are put into place.
Although this adds some overhead to the execution time,
at the end of that section we discuss why scalability is not a practically relevant concern.

After finding all the leaves of the repair tree, a further step is needed in the
case one is looking for founded or justified repairs, as the corresponding trees
may contain leaves that do not correspond to repairs with the desired property.
This step is skipped if all AICs are normal, in view of the results
from~\cite{CEGN2013}.
For founded repairs, we directly apply the definition: for each action $\alpha$,
check that there is an AIC with $\alpha$ in its head and such that all other
literals in its body are satisfied by the database.

For justified repairs, the validation step is less obvious.
Directly following the definition requires constructing the set of
no-effect actions, which is essentially as large as the database, and iterating
over subsets of this set.
This is obviously not possible to do in practical settings.
Therefore, we use some criteria to simplify this step.

\begin{lemma}
  If a rule $r$ was not applied in the branch leading to $\U$, then $\U$ is
  closed under $r$.
\end{lemma}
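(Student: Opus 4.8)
The plan is to read the $\U$ whose closure we must certify as the action set that the definition of justified action set attaches to the leaf, namely the leaf's update actions \emph{together with} the no-effect actions $\neffU$ forced to be present. To show $\U$ is closed under $r$ I assume the premise $\nup(r)\subseteq\lit(\U)$ and derive the conclusion $\head{r}\cap\U\neq\emptyset$. First I would dispose of the complementary, trivial case: had $r$ been applied somewhere along the branch, the construction of $T^j_{\Ieta}$ would have added some $\alpha\in\head{r}$ to the current set, and since actions are only ever added down a branch, that $\alpha$ persists into $\U$ and the conclusion holds outright. The hypothesis that $r$ is \emph{not} applied is precisely what removes this easy case and gives the statement content.

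The core of the argument uses that $\U$ labels a consistent leaf. A leaf has no children, so no closed rule instance is violated there; in particular $\I\circ\U\models r$, meaning the body of $r$ is not fully satisfied, so I may fix some $L\in\body{r}$ with $\I\circ\U\not\models L$. Next I would rule out $L$ being non-updatable: by the premise every $L'\in\nup(r)$ has $\ua(L')\in\U$, and since $\U$ is consistent and no-effect actions leave the database unchanged, each such $\ua(L')$ forces $\I\circ\U\models L'$. Hence the unsatisfied literal must come from the updatable part, i.e.\ $L=\lit(\alpha)^D$ for some $\alpha\in\head{r}$, and therefore $\I\circ\U\models\lit(\alpha)$.

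It remains to turn ``$\lit(\alpha)$ holds in $\I\circ\U$'' into ``$\alpha\in\U$''. Here one splits into two cases: either $\alpha$ was performed along the branch, so $\alpha\in\U$; or $\lit(\alpha)$ already held in $\I$ and was never reversed, so that $\alpha$ is by definition a no-effect action and lies in $\neffU\subseteq\U$. These cases are exhaustive, since if $\lit(\alpha)$ is true in $\I\circ\U$ but false in $\I$ then $\alpha$ must have been performed. Either way $\alpha\in\head{r}\cap\U$, proving closure. I expect this last step to be the main obstacle, and it is exactly why the no-effect actions must be included in $\U$: without them a head literal could be true in $\I\circ\U$ purely because of the original contents of $\I$, with no corresponding performed action, and $\U$ would genuinely fail to be closed under such an $r$ — this is the very phenomenon the declarative definition captures by quantifying over $\neffU$.

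Finally I would record the algorithmic payoff motivating the lemma: combined with the trivial applied-rule case, it shows that only the finitely many rule instances actually labelling edges of the branch require a direct closure test, so the validation never has to enumerate the database-sized set $\neffU$ explicitly.
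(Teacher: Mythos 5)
Your proof is correct and takes essentially the same route as the paper's: the paper's (much terser) two-line argument is precisely that if $\nup(r)$ is satisfied and $\head{r}$ contributed no action (performed or no-effect) to $\U$, then $r$ would still be applicable at the leaf, contradicting that $\U$ is a repair. Your case analysis on which body literal fails and on performed versus no-effect head actions merely spells out the details the paper leaves implicit.
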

\begin{proof}
  Suppose that $r$ was never applied and assume $\nup(r)\subseteq\neff\U$.
  Then necessarily $\head r\cap\neff\U\neq\emptyset$, otherwise $r$ would be
  applicable and $\U$ would not be a repair.
\end{proof}

By construction, $\U$ is also closed for all rules applied in the branch leading
to it.

Let $\U$ be a candidate justified weak repair.
In order to test it, we need to show that $\U\cup\neff\U$ is a justified
action set (see~\cite{CEGN2013}), which requires iterating over all subsets of
$\U\cup\neff\U$ that contain $\neff\U$.
Clearly this can be achieved by iterating over subsets of $\U$.

But if $\U^\ast\subseteq \U$, then $\nup(r)\cap \U^\ast=\emptyset$; this
allows us to simplify the closedness condition to: if $\nup(r)\subseteq\neff\U$,
then $\U^\ast\cap\head r=\emptyset$.
The antecedent needs then only be done once (since it only depends on $\U$),
whereas the consequent does not require consulting the database.

The following result summarizes these properties.
\begin{lemma}
  A weak repair $\U$ in a leaf of the justified repair tree for $\Ieta$ is a
  justified weak repair for $\Ieta$ iff, for every set $\U^\ast\subseteq\U$, 
  if $\nup(r)\subseteq\neff\U$, then $\U^\ast\cap\head r=\emptyset$.
\end{lemma}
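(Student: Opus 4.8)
The plan is to reduce the statement to the standard characterisation of justified weak repairs through justified action sets. By definition, $\U$ is a justified weak repair for $\Ieta$ exactly when $\U\cup\neff\U$ is a justified action set, that is, a minimal consistent set that contains $\neff\U$ and is closed under $\eta$. Since $\U$ already labels a leaf of $T^j_{\Ieta}$, I would first dispatch the easy requirements: consistency of $\U\cup\neff\U$ is inherited from the leaf, and containment of $\neff\U$ is immediate. For closedness under $\eta$ I would invoke the two preceding results — rules never applied on the branch leave $\U$ closed, while rules applied on the branch close $\U$ by construction — and then lift closedness from $\U$ to $\U\cup\neff\U$ by a short argument using $\I\circ\U\models\eta$: whenever the non-updatable part of a closed instance $r$ is stable, satisfaction of $r$ in $\I\circ\U$ forces a head action to lie in $\neff\U$, so the closedness obligation is met.

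With consistency, containment and closedness secured, the entire content of ``justified action set'' collapses to \emph{minimality}, which is the part the lemma really addresses. Here I would use that $\U\cap\neff\U=\emptyset$ — the actions of a weak repair genuinely change $\I$, whereas no-effect actions by definition do not — so that every set $T$ with $\neff\U\subseteq T\subseteq\U\cup\neff\U$ has the form $T=\neff\U\cup\U^\ast$ for a unique $\U^\ast\subseteq\U$. Consequently $\U\cup\neff\U$ is minimal if and only if no proper subset $\neff\U\cup\U^\ast$ (with $\U^\ast\subsetneq\U$) is itself closed under $\eta$.

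It then remains to simplify what closedness of $\neff\U\cup\U^\ast$ means, and this is the step I expect to be delicate. For a closed instance $r$, I would show the antecedent $\nup(r)\subseteq\lit(\neff\U\cup\U^\ast)$ depends only on $\neff\U$: since $\ua(\nup(r))$ is disjoint from $\U^\ast$, any non-updatable literal that is witnessed at all must be witnessed by a no-effect action, giving $\nup(r)\subseteq\lit(\neff\U)$, abbreviated $\nup(r)\subseteq\neff\U$. Dually, $\head r\cap\neff\U=\emptyset$, so the consequent $\head r\cap(\neff\U\cup\U^\ast)\neq\emptyset$ reduces to $\head r\cap\U^\ast\neq\emptyset$. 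The careful bookkeeping between literals and their update actions under $\lit$ and $\ua$, together with the verification of these two disjointness facts, is the real work of the proof; everything else is routine.

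Finally I would assemble the pieces. Negating ``$\neff\U\cup\U^\ast$ is closed'' in the simplified form, and ranging over all admissible $\U^\ast$, turns the minimality criterion of the second paragraph into the condition displayed in the lemma: for every $\U^\ast\subseteq\U$, whenever $\nup(r)\subseteq\neff\U$ one has $\U^\ast\cap\head r=\emptyset$. Reading the resulting equivalence in both directions yields the stated ``iff''. The payoff I would emphasise is exactly what the preceding discussion anticipates: the antecedent $\nup(r)\subseteq\neff\U$ is tested once per rule against $\U$ alone, while the consequent is a purely set-theoretic check on $\head r$ that never again consults the database.
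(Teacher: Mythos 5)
You follow the same route as the paper --- which, it should be said, prints no proof of this lemma at all: it is offered as a ``summary'' of the informal paragraphs preceding it (reduce to the justified-action-set characterisation, settle closedness of $\U\cup\neff\U$ via the preceding lemma plus the tree construction, replace subsets of $\U\cup\neff\U$ containing $\neff\U$ by subsets of $\U$, then simplify the closedness test). Your decomposition, including the explicit observation that $\U\cap\neff\U=\emptyset$, matches that discussion step for step, and you correctly locate the delicate points.

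The difficulty is that the two facts you defer as ``the real work'' are exactly the ones that fail in general, and the final assembly does not produce the displayed condition. First, $\ua(\nup(r))\cap\U^\ast=\emptyset$ is not true for an arbitrary closed instance $r$ and $\U^\ast\subseteq\U$: a non-updatable literal of one rule can be made true by the head action of \emph{another} rule that ended up in $\U$. Take $\eta=\{\pnot a\supset +a;\ a,\pnot b\supset +b\}$ and $\I=\emptyset$: the branch applying the first rule and then the second ends in a consistent leaf $\U=\{+a,+b\}$ with $\J=\emptyset$, so $\U$ is a justified repair by Theorem~\ref{thm:trees}(6), yet $\ua(\nup(a,\pnot b\supset +b))=\{+a\}\subseteq\U$. (The paper asserts the same disjointness, equally without argument.) Second, minimality of $\U\cup\neff\U$ says that for every \emph{proper} $\U^\ast\subsetneq\U$ there \emph{exists} a closed instance $r$ with $\nup(r)\subseteq\lit(\neff\U)$ \emph{and} $\head r\cap(\neff\U\cup\U^\ast)=\emptyset$ --- an existentially quantified conjunction --- whereas the lemma displays a universally quantified implication over all $\U^\ast\subseteq\U$, including $\U^\ast=\U$. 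These are not interconvertible by ``negating and ranging over $\U^\ast$'': already for $\eta=\{\pnot a\supset +a\}$ and $\I=\emptyset$, the justified repair $\U=\{+a\}$ violates the displayed condition (take $\U^\ast=\U$ and $r=\pnot a\supset +a$; then $\nup(r)=\emptyset\subseteq\neff\U$ but $\U^\ast\cap\head r\neq\emptyset$). So either the target statement must be repaired (proper subsets only, existential quantification over $r$, conjunction in place of implication) or the concluding step of your proof cannot be carried out; as written, the proposal inherits the paper's gap rather than closing it.
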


The different implementations of repair trees use different subclasses of the
abstract class \verb+Node+; in particular, nodes of \verb+JustifiedRepairTree+s
must keep track not only of the sets of update actions being constructed, but
also of the sets of non-updatable actions that were assumed.
These labels are stored as \verb+Set<Action>+ using \verb+HashSet+ from the
Java library as implementation, as they are repeatedly tested
for membership everytime a new node is generated.

For efficiency, repair trees maintain internally a set of the sets of update
actions that label nodes constructed so far as a \verb+Set<Node>+.
This is used to avoid generating duplicate nodes with the same label.
Since this set is used mainly for querying, it is again implemented as a
\verb+HashSet+.
Nodes with inconsistent labels are also immediately eliminated, since they can
only produce inconsistent leaves.

\subsection{Interfacing with the User}

\begin{figure}[b]
  \centering
  \resizebox{.9\columnwidth}!{\includegraphics{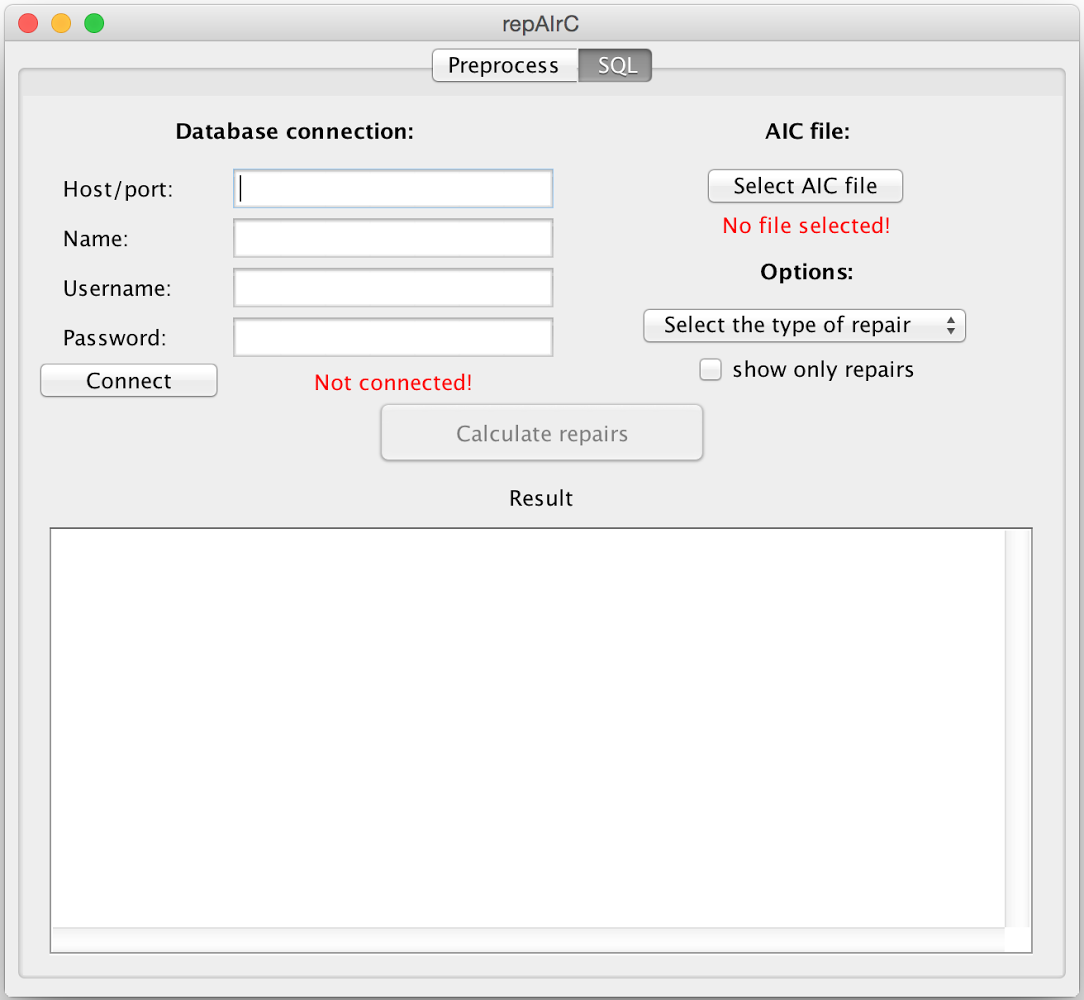}}
  \caption{The initial screen for \tool.}
  \label{fig:menu}
\end{figure}

\noindent
The user interface for \tool\ is implemented using the standard Java GUI widget
toolkit \verb+Swing+, and is rather straightforward.
On startup, the user is presented with the dialog box depicted in
Figure~\ref{fig:menu}.

The user can then provide credentials to connect to a database, as well as enter
a file containing a set of AICs.
If the connection to the database is successful and the file is successfully
parsed, \tool\ invokes the \verb+aicsCompatible+ method required by the
implementation of the \verb+DB+ interface (see Section~\ref{ssec:interface}) and
verifies that all tables and columns mentioned in the set of AICs are valid
tables and columns in the database.
If this is not the case, then an error message is generated and the user is
required to select new files; otherwise, the buttons for configuration and
computation of repairs become active.

Once the initialization has succeeded, one can check the database for
consistency and obtain different types of repairs, computed using the repair
tree described above.
As it may be of interest to obtain also weak repairs, the user is given the
possibility of selecting whether to
see only the repairs computed, or all valid leaves of the repair tree -- which
typically include some weak repairs.
In both cases the necessary validations are performed, so that leaves that do not
correspond to repairs (in the case of founded or justified repairs) are never
presented.

An example output screen after successful computation of the repairs for an
inconsistent database can be seen in Figure~\ref{fig:output}.

\begin{figure}[b]
  \centering
  \resizebox{.9\columnwidth}!{\includegraphics{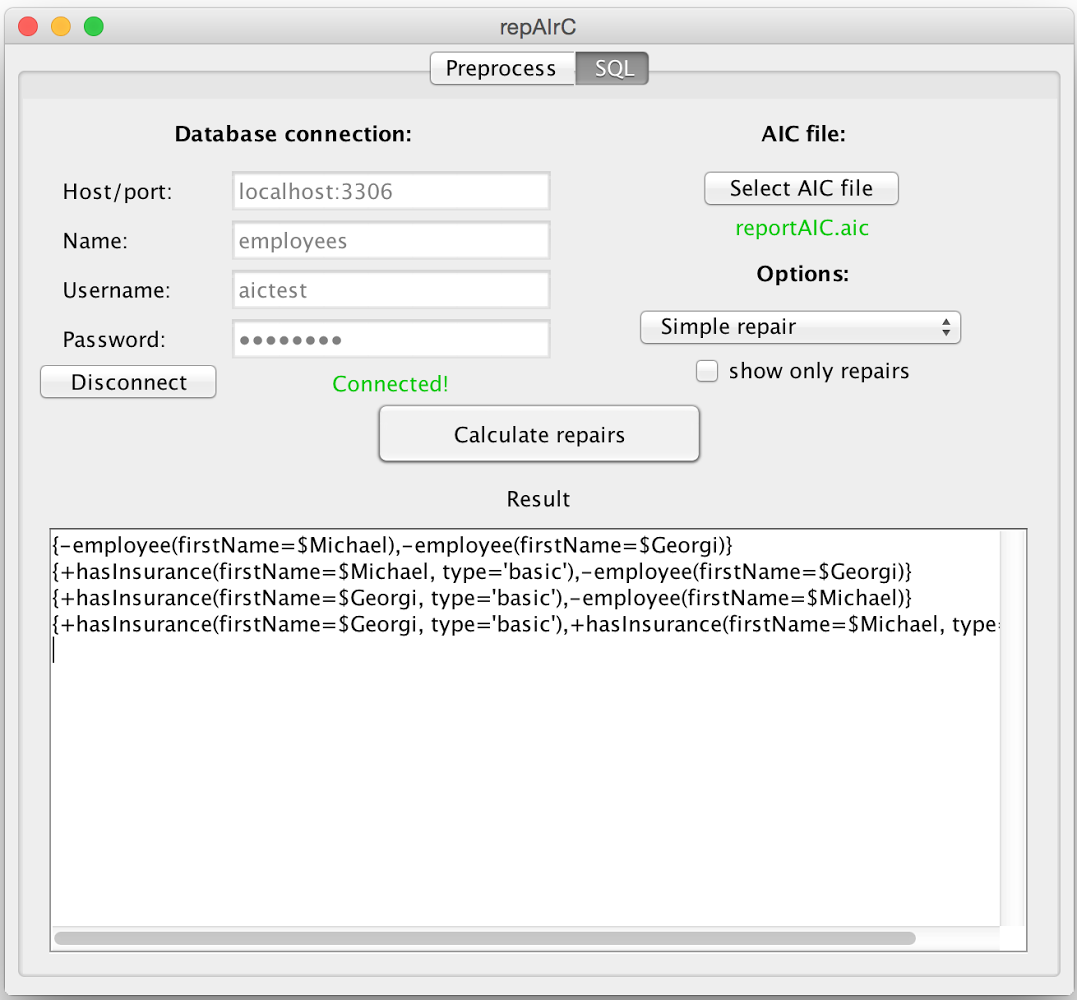}}
  \caption{Possible repairs of an inconsistent database.}
  \label{fig:output}
\end{figure}

\section{\uppercase{Parallelization and Stratification}}
\label{sec:parallel}

As described in Section~\ref{ssec:par}, it is possible to parallelize the search for
repairs of different kinds by splitting the set of AICs into independent sets;
in the case of founded or justified repairs, this parallelization can be taken
one step further by also stratifying the set of AICs.
Even though finding partitions and/or stratifications is asymptotically not
very expensive (it can be solved in linear time by the well-known graph
algorithms described below), it may still take noticeable time if the set of
AICs grows very large.

Since, by definition, partitions and stratifications are independent of the
actual database, it makes sense to avoid repeating their computation unless the
set of AICs changes.
For this reason, parallelization capabilities are implemented in \tool\ in a
two-stage process.
Inside \tool, the user can switch to the \verb+Preprocess+ tab, which provides
options for computing partitions and stratifications of a set of AICs.
This results in an annotated file which still can be read by the parser; in the
main tab, parallel computation is automatically enabled whenever the input file
is annotated in a proper manner.

\subsection{Implementation}

Computing optimal partitions in the spirit of~\cite{lcf:14} is not feasible in a
setting where variables are present, as this would require considering all
closed instances of all AICs -- but it is also not desirable, as it would also
result in a significant increase of the number of queries to the database.
Instead, we work with the adapted definition of dependency given in
Section~\ref{sec:background}.
Given a set of AICs, \tool\ constructs the adjacency matrix for the undirected
graph whose nodes are AICs and such that there is an edge between $r_1$ to $r_2$
iff $r_1$ and $r_2$ are not independent.
A partition is then computed simply by finding the connected components in this
graph by a standard graph algorithm.

The partitions computed are then written to a file, where each partition begins
with the line
\begin{verbatim}
#PARTITION_BEGIN_[NO]#
\end{verbatim}
where \verb+[NO]+ is the number of the current partition, and ends with
\begin{verbatim}
#PARTITION_END#
\end{verbatim}
and the AICs in each partition are inserted in between, in the standard format.

To compute the partitions for stratification, we need to find the strongly connected components of a
similar graph.
This is now a directed graph where there is an edge from $r_1$ to $r_2$ if $r_1$
precedes $r_2$.
The implementation is a variant of Tarjan's algorithm~\cite{Tarjan72}, adapted to
give also the dependencies between the connected components.

The computed stratification is then written to a file with a similar syntax to
the previous one, to which a dependency section is added, between the special
delimiters
\begin{verbatim}
#DEPENDENCIES_BEGIN#
\end{verbatim}
and
\begin{verbatim}
#DEPENDENCIES_END#
\end{verbatim}
\noindent The dependencies are included in this section as a sequence of strings
\verb+X -> Y+, one per line, where \verb+X+ and \verb+Y+ are the numbers of two
partitions
and \verb+Y+ precedes \verb+X+.

\begin{example}
\label{ex:strat}
  The two AICs from Example~\ref{ex:AIC1} cannot be parallelized, as they both
  use the \verb+junior+ table, but they can be stratified, as only the first one
  makes changes to this table.
  Preprocessing this example by \tool\ would return the following output.

{\small
\begin{verbatim}
#PARTITION_BEGIN_1#
junior(id = $X),
  category(type = boss, empId = $X)
  -> - junior(id = $X);
#PARTITION_END#
#PARTITION_BEGIN_2#
junior(id = $X),
  NOT insured(empId = $X, type = basic)
  -> + insured(empId = $X, type = basic);
#PARTITION_END#
#DEPENDENCIES_BEGIN#
2 -> 1
#DEPENDENCIES_END#
\end{verbatim}}

Imagine a simple scenario where the \verb+junior+ table contains a single entry.
Then, computing repairs for this set of AICs can be achieved by first repairing
partition $1$ (which will generate a tree with only one node) and then repairing
the resulting database w.r.t.~partition $2$ (which builds another tree, also
with only one node).
By comparison, processing the two AICs simultaneously would potentially give a
tree with $4$ nodes, as both AICs would have to be considered at each stage.
\end{example}

In general, if there are $n$ entries in the \verb+junior+ table, the stratified
approach will construct at most $n+1$ trees with a total of $n^2+n$ nodes (one
tree with $n$ nodes for the first AIC, at most $n$ trees with at most $n$ nodes
for the second AIC).
By contrast, processing both AICs together will construct a tree with
potentially $(2n)!$ leaves, which by removing duplicate nodes may still contain
$2^{2n}$ nodes.

This example shows that, by stratifying AICs, we can actually get an exponential
decrease on the size of the repair trees being built -- and therefore also on
the total runtime.

In addition to alleviating the exponential blowup of the repair trees, parallelization
and stratification also allow for a multi-threaded implementation, where repair trees
are built in parallel in multiple concurrent threads. To ensure that the dependencies between the
partitions are respected, the threads are instructed to wait for other threads that compute preceding
partitions. In Example~\ref{ex:strat}, the thread processing partition 2 would be instructed
to first wait for the thread processing partition 1 to finish.

Our empirical evaluation of \tool\ showed that speedups of a factor of $4$ to $7$
were observable even when processing small
parallelizable sets of only two or three AICs. For larger sets of AICs,
parallelization and stratification are necessary to obtain feasible runtimes. In one application,
which allowed for $15$ partitions to be processed independently, the stratified version computed
the founded repairs in approximately $1$ second, whereas the sequential version did not terminate
within a time limit of $15000$ seconds. This corresponds to a speedup of at least four orders of magnitude,
demonstrating the practical impact of the contributions of this section.

\subsection{Practical Assessment}

In the worst case, parallelization and stratification will have no impact on the
construction of the repair tree, as it is possible to construct a set of AICs
with no independent subsets.
However, the worst case is not the general case, and it is reasonable to believe
that real-life sets of AICs will actually have a high parallelization potential.

Indeed, integrity constraints typically reflect high-level consistency
requirements of the database, which in turn capture the hierarchical nature of
relational databases, where more complex relations are built from simpler ones.
Thus, when specifying \emph{active} integrity constraints there will naturally
be a preference to correct inconsistencies by updating the more complex tables
rather than the most primitive ones.

Furthermore, in a real setting we are not so much interested in repairing a
database once, but rather in ensuring that it remains consistent as its
information changes.
Therefore, it is likely that inconsistencies that arise will be localized to a
particular table.
The ability to process independent sets of AICs separately guarantees that we
will not be repeatedly evaluating those constraints that were not broken by
recent changes, focusing only on the constraints that can actually become
unsatisfied as we attempt to fix the inconsistency.

For the same reason, scalability of the techniques we implemented is not a relevant issue:
there is no practical need to develop a tool that is able to fix hundreds of inconsistencies
efficiently simultaneously, since each change to the database will likely only impact a few AICs.

\section{\uppercase{Conclusions and Future Work}}
\label{sec:concl}

\noindent
We presented a working prototype of a tool, called \tool, to check integrity of
real-world SQL databases with respect to a given set of
active integrity constraints, and to compute different
types of repairs automatically in case inconsistency is detected, following the ideas and algorithms
in~\cite{Flesca2004,Caroprese2007,Caroprese2011,CEGN2013,lcf:14}.
This tool is the first implementation of a concept we believe to have
the potential to be integrated in current database management systems.

Our tool currently does not automatically apply repairs to the database, rather presenting them to the user.
As discussed in \cite{Eiter1992}, such a functionality is not likely to be obtainable, as human intervention in the process of database repair is generally accepted to be necessary.
That said, automating the generation of a small and relevant set of repairs is a first important step in
ensuring a consistent data basis in Knowledge Management.

In order to deal with real-world heterogenous knowledge management systems, we
are currently working on extending and generalizing the notion of (active)
integrity constraints to encompass more complex knowledge repositories such as
ontologies, expert reasoning systems, and distributed knowledge bases.
The design of \tool\ has been with this extension in mind, and we believe that
its modularity will allow us to generalize it to work with such knowledge
management systems once the right theoretical framework is developed.

On the technical side, we are planning to speed up the system by integrating a local
database cache for peforming the many update and undo actions during exploration
of the repair trees without the overhead of an external database connection.

\section*{\uppercase{Acknowledgments}}

This work was supported by the Danish Council for Independent Research, Natural Sciences, and by FCT/MCTES/PIDDAC under centre grant to BioISI (Centre Reference: UID/MULTI/04046/2013).
Marta Ludovico was sponsored by a grant ``Bolsa Universidade de Lisboa / Funda\c c\~ao Amadeu Dias''.

\vfill
\bibliographystyle{apalike}
{\small
\bibliography{bibliografia}}

\begin{thebibliography}{}

\bibitem[Abiteboul, 1988]{Abiteboul1988}
Abiteboul, S. (1988).
\newblock Updates, a new frontier.
\newblock In Gyssens, M., Paredaens, J., and van Gucht, D., editors, {\em
  ICDT'88, 2nd International Conference on Database Theory, Bruges, Belgium,
  August 31 -- September 2, 1988, Proceedings}, volume 326 of {\em LNCS}, pages
  1--18. Springer.

\bibitem[Caroprese et~al., 2007]{Caroprese2007}
Caroprese, L., Greco, S., and Molinaro, C. (2007).
\newblock Prioritized active integrity constraints for database maintenance.
\newblock In Ramamohanarao, K., Krishna, P.~R., Mohania, M.~K., and
  Nantajeewarawat, E., editors, {\em Advances in Databases: Concepts, Systems
  and Applications, 12th International Conference on Database Systems for
  Advanced Applications, DASFAA 2007, Bangkok, Thailand, April 9-12, 2007,
  Proceedings}, volume 4443 of {\em LNCS}, pages 459--471. Springer.

\bibitem[Caroprese et~al., 2009]{Caroprese2009}
Caroprese, L., Greco, S., and Zumpano, E. (2009).
\newblock Active integrity constraints for database consistency maintenance.
\newblock {\em IEEE Transactions on Knowledge and Data Engineering},
  21(7):1042--1058.

\bibitem[Caroprese and Truszczy\'nski, 2011]{Caroprese2011}
Caroprese, L. and Truszczy\'nski, M. (2011).
\newblock Active integrity constraints and revision programming.
\newblock {\em Theory and Practice of Logic Programming}, 11(6):905--952.

\bibitem[Cruz-Filipe, 2014]{lcf:14}
Cruz-Filipe, L. (2014).
\newblock Optimizing computation of repairs from active integrity constraints.
\newblock In Beierle, C. and Meghini, C., editors, {\em Foundations of
  Information and Knowledge Systems - 8th International Symposium, FoIKS 2014,
  Bordeaux, France, March 3-7, 2014. Proceedings}, volume 8367 of {\em LNCS},
  pages 361--380. Springer.

\bibitem[Cruz-Filipe et~al., 2013]{CEGN2013}
Cruz-Filipe, L., Engr{\' a}cia, P., Gaspar, G., and Nunes, I. (2013).
\newblock Computing repairs from active integrity constraints.
\newblock In Wang, H. and Banach, R., editors, {\em 2013 International
  Symposium on Theoretical Aspects of Software Engineering, Birmingham, UK,
  July 1st--July 3rd 2013}, pages 183--190. IEEE.

\bibitem[Duhon, 1998]{Duhon98}
Duhon, B.~R. (1998).
\newblock It's all in our heads.
\newblock {\em Informatiktage}, 12(8):8--13.

\bibitem[Eiter and Gottlob, 1992]{Eiter1992}
Eiter, T. and Gottlob, G. (1992).
\newblock On the complexity of propositional knowledge base revision, updates,
  and counterfactuals.
\newblock {\em Artificial Intelligence}, 57(2--3):227--270.

\bibitem[Flesca et~al., 2004]{Flesca2004}
Flesca, S., Greco, S., and Zumpano, E. (2004).
\newblock Active integrity constraints.
\newblock In Moggi, E. and Scott~Warren, D., editors, {\em Proceedings of the
  6th International {ACM} {SIGPLAN} Conference on Principles and Practice of
  Declarative Programming, 24--26 August 2004, Verona, Italy}, pages 98--107.
  ACM.

\bibitem[Katsuno and Mendelzon, 1991]{Katsuno1991}
Katsuno, H. and Mendelzon, A.~O. (1991).
\newblock On the difference between updating a knowledge base and revising it.
\newblock In Allen, J.~F., Fikes, R., and Sandewall, E., editors, {\em
  Proceedings of the 2nd International Conference on Principles of Knowledge
  Representation and Reasoning (KR'91). Cambridge, MA, USA, April 22-25, 1991},
  pages 387--394. Morgan Kaufmann.

\bibitem[K{\"{o}}nig, 2012]{Konig2012}
K{\"{o}}nig, M.~E. (2012).
\newblock What is {KM}?
\newblock Knowledge Management Explained, \url{http://www.kmworld.com/}.

\bibitem[Tarjan, 1972]{Tarjan72}
Tarjan, R.~E. (1972).
\newblock Depth-first search and linear graph algorithms.
\newblock {\em {SIAM} Journal on Computing}, 1(2):146--160.

\bibitem[Winslett, 1990]{Winslett1990}
Winslett, M. (1990).
\newblock {\em Updating Logical Databases}.
\newblock Cambridge Tracts in Theoretical Computer Science. Cambridge
  University Press.

\end{thebibliography}

\end{document}